\theoremstyle{definition}
\newtheorem{definition}{Definition}
\newtheorem{lemma}{Lemma}
\newtheorem{theorem}{Theorem}
\Crefname{figure}{Fig.}{Figs.}
\newcommand{\bbm}{\begin{bmatrix}}
\newcommand{\ebm}{\end{bmatrix}}
\newcommand{\vect}[1]{\boldsymbol{\mathbf{#1}}}
\newcommand{\matr}[1]{\boldsymbol{\mathbf{#1}}}
\newcommand{\Real}{\ensuremath{\mathbb{R}}}
\newcommand{\SimpleSystem}{\ensuremath{ \Sigma }}
\newcommand{\ForwardReachableSet}[3]{ \mathcal{F}(#1, #2, #3) }
\newcommand{\BackwardReachableSet}[3]{ \mathcal{B}(#1, #2, #3) }
\newcommand{\taut}{\ensuremath{\tau}}   
\newcommand{\tauf}{\ensuremath{\tau'}}  
\title{\Large\bf A Question of Time: Revisiting the Use of Recursive\\ Filtering for Temporal Calibration of Multisensor Systems}
\author{Jonathan Kelly$^{\dagger}$, Christopher Grebe, and Matthew Giamou$^{\ddagger}$ 
\thanks{All authors are with the Space \& Terrestrial Autonomous Robotic Systems (STARS) Laboratory at the University of Toronto Institute for Aerospace Studies (UTIAS), Toronto, Canada. {\tt\footnotesize <firstname>.<lastname>@robotics.utias.utoronto.ca}}
\thanks{$^\dagger$Jonathan Kelly is a Vector Institute Faculty Affiliate. This research was supported in part by the Canada Research Chairs program.}
\thanks{$^{\ddagger}$Matthew Giamou is a Vector Institute Postgraduate Affiliate and an RBC Fellow.}}
\begin{document}
\maketitle

\begin{abstract}
We examine the problem of time delay estimation, or temporal calibration, in the context of multisensor data fusion.
Differences in processing intervals and other factors typically lead to a relative delay between measurement updates from disparate sensors.
Correct (optimal) data fusion demands that the relative delay must either be known in advance or identified online.
There have been several recent proposals in the literature to determine the delay using recursive, causal filters such as the extended Kalman filter (EKF). 
We carefully review this formulation and show that there are fundamental issues with the structure of the EKF (and related algorithms) when the delay is included in the filter state vector as a parameter to be estimated.
These structural issues, in turn, leave recursive filters prone to bias and inconsistency.
Our theoretical analysis is supported by simulation studies that demonstrate the implications in terms of filter performance; although tuning of the filter noise variances may reduce the chance of inconsistency or divergence, the underlying structural concerns remain.
We offer brief suggestions for ways to maintain the computational efficiency of recursive filtering for temporal calibration while avoiding the drawbacks of the standard filtering algorithms.
\end{abstract}

\section{Introduction}
\label{sec:intro}

Time delays are inherent in all sensor data processing due to, for example, signal conditioning, integration, and transmission latency.
Optimal data fusion from multiple sources requires that the relative delays between sensor data streams must either be determined offline in advance or identified online, when possible.

In this paper, we consider the task of estimating a single delay in a multisensor system.
A variety of systematic time delay identification, or temporal calibration, methods have been published in the literature---there are bespoke, offline estimators for specific sensor pairs \cite{2011_Mair_Spatio-Temporal,2014_Kelly_General,2014_Kelly_Determining}, more general batch programs \cite{2013_Furgale_Unified,2016_Rehder_General}, and online, sliding-window implementations \cite{2018_Qin_Online}, for example.
Our focus herein is on recent proposals to determine the delay parameter through recursive, causal filtering, using variants of the extended Kalman filter (EKF).
In particular, Nilsson, Skog, and H\"{a}ndel \cite{2010_Nilsson_Joint,2011_Skog_Time} have applied the EKF to state and delay estimation for GPS-aided inertial navigation, while Li and Mourikis have devised a similar approach for vision-aided inertial navigation \cite{2013_Li_3-D,2014_Li_Online}.
The delay parameter is added to the filter state vector and the measurement update step is modified to account for the delay uncertainty.
Filtering has several potential advantages when implemented successfully, including simplicity and computational efficiency. 

Although recursive filtering is appealing for time delay estimation, we show that there are several problems with this formulation.
Our exposition is based upon the analysis of a simplified aided navigation system, where measurements from one sensor are treated as control inputs to drive the filter process model while updates are provided by a second, complementary sensor.
An overview of the delay model is provided in \Cref{fig:overview}.

\begin{figure}[t]
\vspace{1mm}
\centering
\includegraphics[width=\columnwidth]{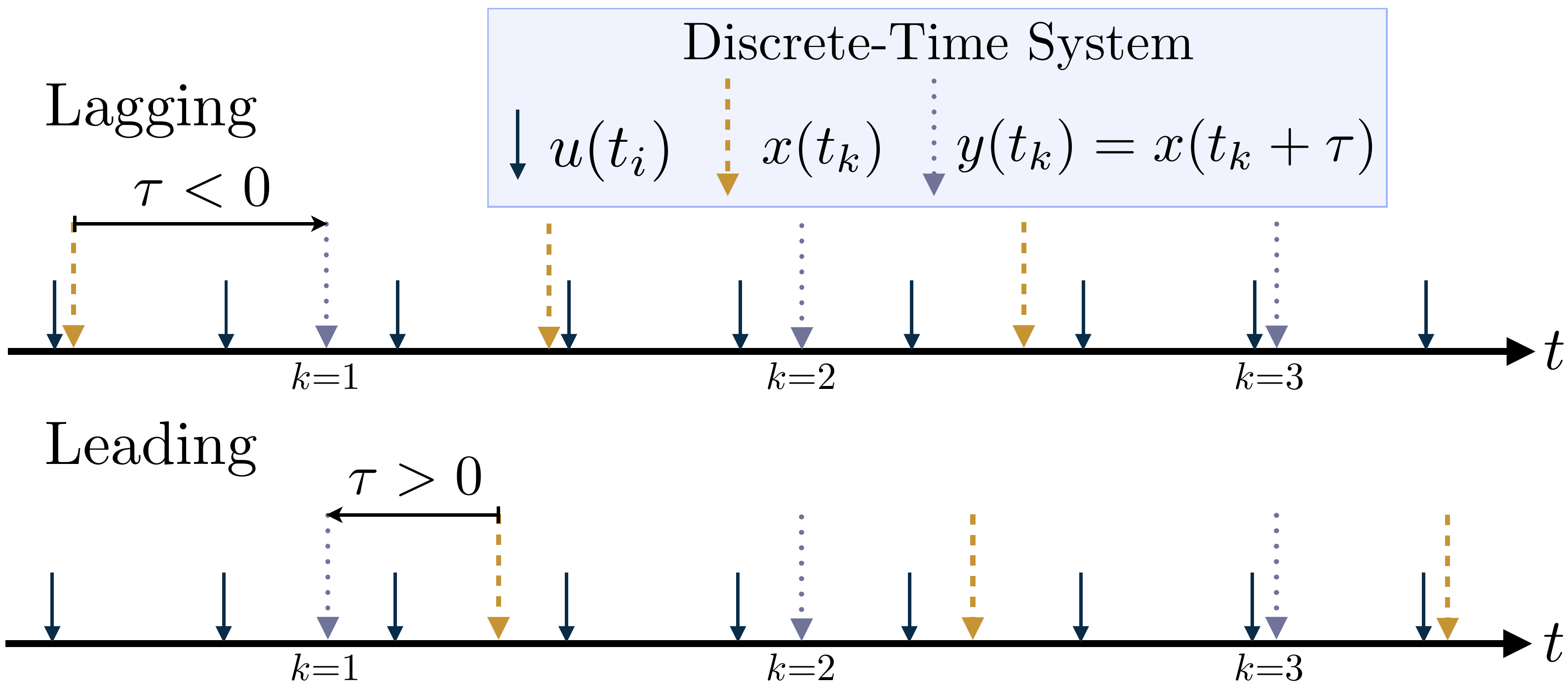}
\caption{A simple time-delay system.
Measurements $u(t_{i})$ from a reference sensor are treated as control inputs to drive the system process model forward in time.
A second sensor provides updates $y(t_{k})$ at regular intervals.
There is an unknown, relative delay $\tau$ between the sensor data streams, such that the update $y(t_{k}) = x(t_{k} + \tau)$ either lags or leads the state, $x(t_{k})$.
\emph{Top:} If $\tau$ is negative, $y(t_{k})$ is delayed relative to $x(t_{k})$, that is, lagging the state.
\emph{Bottom:} If $\tau$ is positive, $y(t_{k})$ is advanced relative to $x(t_{k})$, that is, leading the state.
Either situation may occur in practice.}
\label{fig:overview}
\vspace{-3mm}
\end{figure}

We begin by proving that the delay parameter requires a finite time for identification and by establishing that the point estimate computed by the EKF does not properly account for this requirement.
We then examine how the addition of process noise is coupled to the shifting delay estimate.
Modifications to the delay value change the interval between measurement updates, altering the way in which process noise is incorporated within the EKF.
We demonstrate experimentally that, because of these structural issues, recursive filters such as the EKF are prone to bias and inconsistency (and, potentially, to divergence).
Critically, these fundamental problems exists due to the structure of the estimator---although they can be `covered up' to some extent by tuning of the filter noise variances, the underlying structural concerns remain.
Our contributions are as follows:
\begin{enumerate}
	\item we make explicit the requirements for delay identifiability and explain how these requirements are not satisfied by the EKF, leading to the erroneous computation of the filter innovation sequence;
	\item we establish that the calculated uncertainty of the filter measurement update does not represent the true (underlying) uncertainty in general, making the filter prone to bias and inconsistency;
	\item we show that discontinuous `jumps' (in time) occur at each filter update step, due to changes in the delay estimate, and that these jumps adversely affect the way in which process noise is incorporated into the algorithm; and
	\item we demonstrate through extensive simulations that the issues above manifest in practice and can easily compromise the performance of an EKF-type estimator.
\end{enumerate}

We confine our analysis to the simplified system model for clarity and brevity, but we emphasize that the same shortcomings exist for more complex models, including those used for aided inertial navigation, and for other recursive filters (e.g., the unscented Kalman filter and various particle filters).
While we do not prescribe a full remedy in this short paper, we close by offering several directions to explore that may lead to a structurally sound, recursive algorithm for simultaneous state and delay estimation.  

\section{Background}
\label{sec:background}

A vast body of research on estimation and control under time delays exists.
We give only a brief synopsis here of material that lies within our scope, emphasizing  proximal work. Filtering-based methods for delay estimation are described in \Cref{sec:filter}.
An overview of some open problems in time-delay systems is provided in \cite{2003_Richard_Time-Delay}.

Recursive filtering with delayed and out-of-sequence measurements is a well-studied problem that can be solved optimally (for linear systems) if the delays are known \cite{2006_Simon_Optimal}.
In the case where measurements arrive out of sequence, the additional complication of correlated noise terms must be dealt with, as shown by \cite{1994_Thomopoulos_Decentralized}.
Handling retrodicted measurements in these situations is computationally expensive; an alternative, proposed in \cite{1998_Larsen_Incorporation}, is to extrapolate measurements forward in time, sacrificing optimality for computational tractability.

The problem of recursive state (and parameter) estimation becomes substantially more difficult when the measurement delays are unknown and constant or subject to random variation.
Assuming that the delays are random but quantized and bounded by some known value, Julier and Uhlmann \cite{2005_Julier_Fusion} show that the method of \emph{covariance union} can be applied to produce consistent, filtered state estimates.
Covariance union is a necessarily conservative algorithm, however, that achieves consistency at the expense of estimator efficiency when the delays can be better characterized.
If the delays are random but the delay distribution is available a priori, Choi et al.\ describe a state augmentation technique in \cite{2009_Choi_State} to produce consistent estimates. 
Past states are added to the state vector (increasing the computational complexity) and measurements are then `allocated' across the past and current state estimates according to the delay distribution.

When possible, proper delay identification, carried out either offline or online, can improve the performance of recursive estimators. 
Identification of systematic delays is typically a challenging task, particularly in the  nonlinear setting. 
Xia et al.\ \cite{2002_Xia_Analysis} and Zhang, Xia, and Moog \cite{2006_Zhang_Parameter} have characterized parameter and state identifiability for nonlinear time-delay systems using the theory of non-commutative rings. 
Anguelova and Wennberg extend this framework in \cite{2007_Anguelova_State,2008_Anguelova_State} by providing linear-algebraic criteria to determine whether multiple delays are identifiable.
Importantly, the results in \cite{2008_Anguelova_State} reveal that the identifiability of  delays is not directly related to the identifiability or observability of other system parameters or states, respectively.

There is relevant prior research on deterministic and stochastic observer design for a variety of time-delay systems \cite{2001_Sename_New}.
For certain classes of systems, observers with desirable convergence properties, such as global exponential convergence, can be constructed.
Many of these observers are so-called `chain observers' (see, e.g., \cite{2002_Germani_New} and \cite{2005_Kazantzis_Nonlinear}) that attempt to reconstruct the state at a series of points within the delay window.
The treatment of stochastic time-delay systems is substantially more involved and fewer convergence results are available.
We return to the topic of stochastic observer design in \Cref{sec:conclusion}.

\section{Identifiability of Time Delays}
\label{sec:ident}


In the remainder of the paper, we examine a very simple control-affine system,
\begin{equation}
\label{eqn:single_state}
\SimpleSystem : \quad
\begin{aligned}
\dot{x}(t) & = u(t),
\quad x(0) = x^{0},\\	
y(t) & = x(t + \tau),
\end{aligned}
\end{equation}
%
where $x(t) \in \Real$ is the system state at time $t$, $u(t) \in \Omega = [-1, 1]$ is the applied control, $y(t) \in \Real$ is the system output, and $\tau \in \Real$ is the delay parameter.
Our notation roughly follows that in \cite{1977_Hermann_Nonlinear}. We label specific points on the manifold $\Real$ using superscripts (this avoids a clash with subscripted time indices in \Cref{sec:filter}).
The state of the system at $t = 0$ is $x(0) = x^{0}$.
If $\tau$ is negative, the output is delayed relative to the state (or \emph{lags} the state); if $\tau$ is positive, the output is advanced relative to the state (or \emph{leads} the state).
We use the term \emph{delay} whether $\tau$ is positive or negative, when there is no risk of ambiguity, and indicate the absolute value of the delay by $\tau_{+}$.
Throughout this paper we consider $\tau$ to be fixed but unknown. 
Note that, when no delay is present (i.e., for $\tau = 0$ exactly), the system \cref{eqn:single_state} is trivially observable.

Our goal in this section is to show clearly that the delay parameter of \Cref{eqn:single_state} requires a finite time for identification.
This is an intuitive and straightforward result that is implicit in much of the prior and proximal work. 
We make the result explicit in order to connect it to difficulties involved in recursive filtering later in \Cref{sec:filter}.

\subsection{An Identifiability Criterion}
\label{subsec:ident_criterion}

We take the standard approach to determining identifiability (also sometimes called \emph{structural identifiability} in this context \cite{1970_Bellman_Structural}) based on output distinguishability, as defined in \cite{1976_Grewal_Identifiability}.
The definitions below have been adjusted to fit our problem instance where necessary.

We denote an admissible control input on the closed interval $[t^{0}, t^{1}]$ by the pair $\left(u(t), [t^{0}, t^{1}]\right)$ and the set of all admissible controls on the interval $(-\infty, \infty)$ by $\mathcal{U}$.
With a slight abuse of set notation, we denote the concatenation of two sequential controls that share an endpoint in time with the union operator, for example,
$\left(u(t), [t^{0}, t^{1}]\right) \cup \left(0, (t^{1}, t^{2}]\right)$.

We assume that the system exists for all time but that its output can only be observed for a finite duration, starting arbitrarily at $t = 0$, when information about the control input $u(t)$ is also available.
In certain cases, the state of the system, $x(0)$, at $t = 0$ may also be known.
To determine the delay parameter, we may perform any number of experiments, where an `experiment' involves observing the output $y(t)$ given knowledge of $u(t)$ and that $x(0) = x^{0}$.
We use $y(t, \tau)$ to denote the output of \Cref{eqn:single_state} at time $t$ when the system has delay $\tau$.
The requisite definitions follow.

\begin{definition}[Indistinguishability]
The pair of delay parameter values $(\taut$, $\tauf)$ are said to be \emph{indistinguishable} if 
\begin{equation}
\label{eqn:indistinguishable}
y(t, \taut) = y(t, \tauf)
\end{equation}
for all $x(0) = x^{0}$ and for any admissible control $u(t)$ over the interval $0 \leq t \leq T$.
Otherwise, the two delay values are said to be \emph{distinguishable} over the interval.
\end{definition}

\begin{definition}[Identifiability]
The delay parameter value $\taut$ is \emph{identifiable} if the pair $(\taut, \tauf)$ is distinguishable for all $\tauf$ such that $\tau \neq \tauf$.
\end{definition}

It is important to recognize, in the definitions above, that although the outputs are observed over the \emph{same} interval $[0, T]$, those outputs correspond to \emph{different} times for the underlying system (due to differing delays).

\subsection{Limits on Identifiability}
\label{subsec:ident_limits}

We first define the time-limited forward and backward reachable sets \cite{2006_LaValle_Planning,2017_Bansal_Hamilton-Jacobi} of \Cref{eqn:single_state} from a given state.\footnote{In the control literature, these are usually called the reachable tubes; we use the name \emph{sets} in our context, following \cite{2006_LaValle_Planning}.}

\begin{lemma}
\label{lem:fb_reachable}
For the system $\Sigma$, the time-limited forward reachable set from state $x^0$ in time $T \geq 0$ is the closed interval $\ForwardReachableSet{x^{0}}{\mathcal{U}}{T} := \left[x^{0} - T, x^{0} + T\right]$.
The time-limited backward reachable set from state $x^0$ in time $T$ is $\BackwardReachableSet{x^{0}}{\mathcal{U}}{T} := \left[x^{0} - T, x^{0} + T\right]$.
\end{lemma}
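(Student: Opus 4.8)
The plan is to prove both the forward and backward reachable set characterizations directly from the dynamics $\dot{x}(t) = u(t)$ with $u(t) \in [-1,1]$. The key observation is that integrating the dynamics gives $x(T) = x^0 + \int_0^T u(t)\,dt$, and since the integrand is pointwise bounded by $1$ in absolute value, the displacement $x(T) - x^0$ is bounded in magnitude by $T$. The proof therefore splits naturally into two parts for each set: an \emph{inclusion} showing every reachable state lies in $[x^0 - T, x^0 + T]$, and a \emph{converse} showing every point of that interval is actually reachable by some admissible control.

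First I would establish the forward set. For the upper bound, I would write $x(T) - x^0 = \int_0^T u(t)\,dt$ and apply the triangle inequality (for integrals) together with $|u(t)| \le 1$ to conclude $|x(T) - x^0| \le \int_0^T |u(t)|\,dt \le T$, which gives the containment $\ForwardReachableSet{x^{0}}{\mathcal{U}}{T} \subseteq [x^0 - T, x^0 + T]$. For the reverse containment, I would exhibit an explicit control for each target: given any $x^1 \in [x^0 - T, x^0 + T]$, the constant control $u(t) \equiv (x^1 - x^0)/T$ is admissible (since $|x^1 - x^0|/T \le 1$) and steers the state exactly to $x^1$ in time $T$. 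This establishes equality. The edge case $T = 0$ collapses the interval to the single point $\{x^0\}$, which is consistent.

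Next I would handle the backward reachable set, which asks from which states $x^0$ can be reached in time $T$. By time-reversal symmetry of the integrator dynamics---replacing $u(t)$ with $-u(t)$ maps forward trajectories to backward ones while preserving admissibility, since $\Omega = [-1,1]$ is symmetric about the origin---the backward set coincides with the forward set. Concretely, a state $x^{\mathrm{init}}$ reaches $x^0$ in time $T$ iff $|x^0 - x^{\mathrm{init}}| \le T$, which is exactly the condition $x^{\mathrm{init}} \in [x^0 - T, x^0 + T]$. I would note that this symmetry is why the two intervals in the statement are identical.

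I do not anticipate a genuine obstacle here, since the dynamics are linear and the control set is a symmetric bounded interval; the result is essentially the statement that a unit-speed integrator sweeps out an interval growing at rate one in each direction. The only point requiring mild care is the measure-theoretic justification of the integral triangle inequality for admissible (measurable, bounded) controls, and the verification that the constant control used in the converse direction is itself admissible---both of which are routine. The symmetry argument for the backward set is the one place where a reader might want an explicit sentence, so I would state the $u \mapsto -u$ correspondence plainly rather than leave it implicit.
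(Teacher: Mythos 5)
Your proof is correct and takes essentially the same approach as the paper, whose entire proof is the single sentence that the result ``follows directly from the definition of the set of available controls, $\Omega$, and the time, $T$.'' Your write-up simply supplies the routine details---the integral bound $|x(T)-x^{0}| \le T$ for the inclusion, the constant control $(x^{1}-x^{0})/T$ for the converse, and the $u \mapsto -u$ symmetry for the backward set---that the paper's one-line proof leaves implicit.
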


\begin{proof}
The results follow directly from the definition of the set of available controls, $\Omega$, and the time, $T$.
\end{proof}

We now state an important theorem that will be central to our discussion of recursive filtering in \Cref{sec:filter}. 
The main argument of the theorem is illustrated in \Cref{fig:proof_plot}.

\begin{theorem}
\label{thm:tau_unobservable}
The delay parameter, $\tau$, of the system $\Sigma$ cannot be identified in a time less than or equal to $\tau_{+}$.
\end{theorem}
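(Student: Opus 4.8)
The plan is to establish non-identifiability directly from the definitions: I will show that whenever $T \le \tau_+$, the value $\taut$ fails to be identifiable by exhibiting a companion value $\tauf \neq \taut$ that is \emph{indistinguishable} from it over $[0,T]$ in the sense of \cref{eqn:indistinguishable}. By a symmetric (time-reversal) argument it suffices to treat the case $\taut > 0$, in which the output \emph{leads} the state; the case $\taut < 0$ is the mirror image, using the backward reachable set in place of the forward one, and $\taut = 0$ forces $T = 0$, which is degenerate.

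The crux of my argument will be that a sufficiently large delay pushes the entire observed output onto a portion of the trajectory that the window $[0,T]$ never ``catches up'' to. For a control supported on $[0,T]$ the state is frozen outside the window, with $x(s) = x^{0}$ for $s \le 0$ and $x(s) = x(T)$ for $s \ge T$, where by \Cref{lem:fb_reachable} the endpoint satisfies $x(T) \in \ForwardReachableSet{x^{0}}{\mathcal{U}}{T} = [x^{0} - T,\, x^{0} + T]$. When $\taut \ge T$, the shifted argument $t + \taut$ ranges over $[\taut,\, \taut + T] \subseteq [T, \infty)$ as $t$ runs over $[0,T]$, so $y(t, \taut) = x(t + \taut) = x(T)$ is \emph{constant} in $t$, uniformly over every $x^{0}$ and every admissible control. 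Choosing any $\tauf > T$ with $\tauf \neq \taut$ (for instance $\tauf = \taut + 1$) yields the identical constant output $y(t, \tauf) = x(T)$, whence $y(t, \taut) = y(t, \tauf)$ on $[0,T]$ for all $x^{0}$ and all controls. By the definition of indistinguishability this makes $(\taut, \tauf)$ indistinguishable, and therefore $\taut = \tau_{+} \ge T$ is not identifiable.

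The step I expect to demand the most care is fixing the convention for the control outside $[0,T]$, since the quantifier in \cref{eqn:indistinguishable} ranges over \emph{all} admissible controls while $x(t + \taut)$ formally depends on $u$ beyond the window. The ``frozen state'' reduction---extending controls by zero, consistent with the concatenation convention $\left(u(t), [t^{0},t^{1}]\right) \cup \left(0, (t^{1}, t^{2}]\right)$---is precisely what makes the constant-output claim hold uniformly, and this inability to excite the observed trajectory within time $\tau_{+}$ is exactly what the reachable-set characterization of \Cref{lem:fb_reachable} encodes. I would also verify the boundary case $T = \tau_{+}$ explicitly (at $t = 0$ the argument equals $T$, but the state there is still the frozen endpoint $x(T)$, so constancy is preserved) and note in passing that the bound is tight: once $T > \tau_{+}$ the window overlaps the actively driven segment $[\taut, T]$, which restores distinguishability.
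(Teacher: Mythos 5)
Your proposal is correct and reaches the theorem by a genuinely different, more elementary construction than the paper's. The paper also argues by exhibiting an indistinguishable pair, but it does so by building \emph{time-shifted replica controls} outside the observation window: for lagging delays ($\taut, \tauf \in \Real_{<0}$) it picks any output value $x^{-1}$ in the intersection of the nested backward reachable sets ($B \subset B'$, via \Cref{lem:fb_reachable}) and assigns $\Sigma'$ the control $u(t + \taut - \tauf)$ concatenated with a zero segment, so that $\Sigma'$ reproduces $\Sigma$'s output exactly; the leading case is handled symmetrically with the forward reachable sets $F \subset F'$. Your ``freeze the system outside the window'' choice is essentially the degenerate instance of this construction (take the unobserved control to be identically zero, so both delayed outputs collapse to the constant $x^{0}$ for lagging delays or $x(T)$ for leading ones), and it does establish indistinguishability in the sense of the paper's definitions, because --- exactly as in the paper's own proof, which assigns \emph{different} outside controls to $\Sigma$ and $\Sigma'$ --- the behaviour of the system outside $[0,T]$ is unknown to the experimenter and may be chosen by the prover. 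What the paper's more elaborate construction buys is a strictly stronger conclusion: \emph{every} input-output record realizable by $\Sigma$ on $[0,\tau_{+}]$, not merely the frozen constant-output one, can be reproduced by a system with a different delay; that stronger form is what the paper leans on later, in \Cref{subsec:delay_estimate}, when it argues that the filter innovation variance ought to reflect the full reachable-set uncertainty. So your argument suffices for \Cref{thm:tau_unobservable} as stated, but note the trade-off you flagged yourself: if indistinguishability were demanded for every possible true behaviour outside the window (rather than for some adversarially chosen one), your frozen-trajectory scenario would cover only a single behaviour, whereas the paper's time-shift construction covers them all.
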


\begin{proof}
The proof is by construction.
We show that there always exist two instances of \cref{eqn:single_state}, with different delays, that share the same input-output map for any admissible control $\big(u(t), [0, \tau_{+}]\big)$.
Denote these two instances by $\Sigma$ and $\Sigma'$.\footnote{We apply the `primed' notation throughout this section to distinguish the two systems that have differing delay values.}
\smallskip

Consider the delays $\taut, \tauf \in \Real_{<0}$ such that $\tauf < \taut$. 
Let the state of the system $\Sigma$ with delay $\taut$ be $x(0) = x^{0}$. Likewise, let the state of the system $\Sigma'$ with delay $\tauf$ be $x'(0) = x^{0}$. 
An arbitrary control $\big(u(t), [0, \tau_{+}]\big)$ applied to both systems results in trajectories that are identical on $[0, \tau_{+}]$.
However, the applied control does not affect the outputs, $y(t)$ and $y'(t)$, of $\Sigma$ and $\Sigma'$, respectively, over the interval $[0, \tau_{+}]$, because the measured outputs are of \emph{past} states.

Let $B = \BackwardReachableSet{x^{0}}{\mathcal{U}}{\tau_{+}} := \left[x^{0} + \tau, x^{0} - \tau\right]$ and $B' = \BackwardReachableSet{x^{0}}{\mathcal{U}}{\tauf_{+}} := \left[x^{0} + \tauf, x^{0} - \tauf\right]$ be the time-limited backward reachable sets from $x^{0}$ for $\Sigma$ and $\Sigma'$, respectively, by \Cref{lem:fb_reachable}.
Since the time-limited backward reachable set from $x^{0}$ is not a singleton for either $\Sigma$ or $\Sigma'$, knowledge of $x^{0}$ fails to distinguish $\Sigma$ from $\Sigma'$.\footnote{If the system admits only one possible trajectory, then knowledge of the state at any time uniquely determines the state at all past and future times.}

Let the outputs be $y(0) = x(\taut) = y'(0) = x'(\tauf) = x^{-1}$, where $x^{-1} \in B \cap B'$.
This is always possible because $B \subset B'$.
Apply a control $u(t)$ to $\Sigma$ that is a solution to $\dot{x}(t) = u(t)$ with $x(\taut) = x^{-1}$ and $x(0) = x^{0}$, and apply the control $\big(u'(t), [\tauf, 0]\big) =   \big(u(t + \taut - \tauf), [\tauf, \tauf - \taut]\big) \cup \big(0, (\tauf - \taut, 0]\big)$ to $\Sigma'$ with $x'(\tauf) = x^{-1}$ (see the inputs in \Cref{fig:proof_plot}).
We have
\begin{multline}
x^{-1} + \int^{0}_{\taut} u(t)\,dt = \\
x^{-1} + \int^{\tauf - \taut}_{\tauf} u(t + \taut - \tauf)\,dt + \int^{0}_{\tauf - \taut} (0)\,dt = x^{0}.
\end{multline}
The trajectories of $\Sigma$ and $\Sigma'$ are identical over the intervals $[\taut, 0]$ and $[\tauf, \tauf - \taut]$. Hence, the outputs are also identical over $[0, \taut_{+}]$ and $\taut$ and $\tauf$ cannot be distinguished.
\medskip

\begin{figure}[t]
\centering
\includegraphics[width=\columnwidth,trim=0 17spt 0 0,clip]{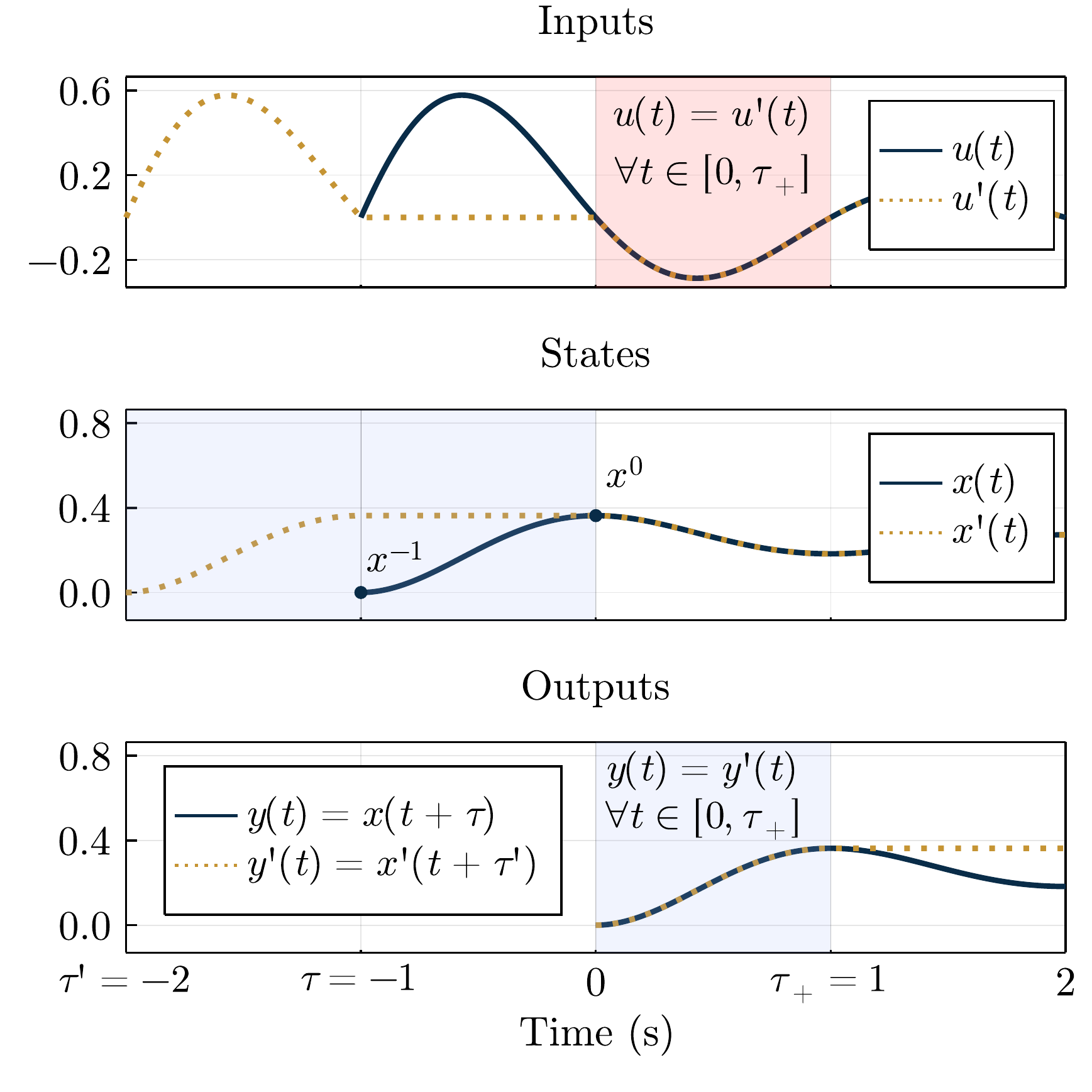}
\vspace*{-8mm}
\caption{An illustration of the main argument in the proof of \Cref{thm:tau_unobservable} for $\tau = -1$ and $\tau' = -2$. The control $u(t)$ on interval $[0, \tau_{+}]$ (shaded red) does not affect the output $y(t)$ (shaded blue) on $[0, \tau_{+}]$. Therefore, for the inputs $u(t)$ and $u'(t)$ illustrated here, $\tau$ and $\tau'$ are indistinguishable on $[0, \tau_{+}]$ and the time delay is unidentifiable.}\label{fig:proof_plot}
\vspace*{-4mm}
\end{figure}

Similarly, consider the delays $\tau, \tau' \in \Real_{>0}$ such that $\tau' > \tau$.
As before, let $x(0) = x'(0) = x^{0}$ and let an arbitrary control 
$\big(u(t), [0, \tau_{+}]\big)$ be applied to both systems, resulting in trajectories that are identical on $[0, \tau_{+}]$.
The applied control does not affect the outputs, $y(t)$ and $y'(t)$, of $\Sigma$ and $\Sigma'$, respectively, over the interval $[0, \tau_{+}]$, because the measured outputs are of \emph{future} states.

Let $F = \ForwardReachableSet{x^{0}}{\mathcal{U}}{\tau} := \left[x^{0} - \tau, x^{0} + \tau\right]$ and $F' = \ForwardReachableSet{x_{0}}{\mathcal{U}}{\tau'} := \left[x^{0} - \tau', x^{0} + \tau'\right]$ be the time-limited forward reachable sets from $x^{0}$ for $\Sigma$ and $\Sigma'$, respectively, by \Cref{lem:fb_reachable}.

Let the outputs be $y(0) = x(\taut) = y'(0) = x'(\tauf) = x^{1}$, where $x^{1} \in F \cap F'$.
This is always possible because $F \subset F'$.
Apply a control $u(t)$ to $\Sigma$ that is a solution to $\dot{x}(t) = u(t)$ with $x(\taut) = x^{1}$ and $x(2\taut) = x^{2}$, and apply the control 
$\big(u'(t), [\taut, \tauf + \taut]\big) = \big(0, [\taut, \tauf)\big) \cup \big(u(t + \taut - \tauf), [\tauf, \tauf + \taut]\big)$ to $\Sigma'$ with $x'(\tauf) = x^{1}$.
We have
\begin{multline}
x^{1} + \int^{2\taut}_{\taut} u(t)\,dt = \\
x^{1} + \int^{\tauf}_{\taut} (0)\,dt + \int^{\tauf + \taut}_{\tauf} u(t + \taut - \tauf)\,dt = x^{2}.
\end{multline}
The trajectories of $\Sigma$ and $\Sigma'$ are identical over the intervals $[\taut, 2\taut]$ and $[\tauf, \tauf + \taut]$. Hence, the outputs are again identical over the interval $[0, \tau_{+}]$ and $\taut$ and $\tauf$ cannot be distinguished.
\end{proof}

The proof involves a (possibly) discontinuous control function but this is for convenience only---the result immediately generalizes to continuous controls as well.

\section{Recursive Filtering}
\label{sec:filter}

In this section, we explore how the problem of delay identifiability manifests in recursive filtering. 
Our theoretical analysis is supported by the experimental results in \Cref{sec:experiments}.
We treat \Cref{eqn:single_state} as a simple model of an aided navigation system and refer to the sensor whose measurements drive the process model as the \emph{reference sensor}.

\subsection{Extended Kalman Filter Formulation}
\label{subsec:ekf}

We apply the modified continuous-discrete (hybrid) extended Kalman filtering algorithm described in \cite{2014_Li_Online} to the system \Cref{eqn:single_state}.
The filter state vector is
\begin{equation}
\label{eqn:state_vector}
\vect{x}(t) = \bbm x(t) & \tau \ebm^{T},
\end{equation}
where we call $x(t)$ the \emph{position at time $t$}.
The state evolves in continuous time according to
\begin{equation}
\label{eqn:dynamics}
\dot{x}(t) = u(t) + w(t),\quad 
\dot{\tau}(t) = 0.
\end{equation}
We take the usual approach and define the noise term $w(t) \sim \mathcal{GP}\big(0, Q_{x}\delta(t - t')\big)$, that is, as a zero-mean, white noise Gaussian process with the variance function $Q_{x}\delta(t - t')$, where $Q_{x}$ is the power spectral density and $\delta(\cdot)$ is the Dirac delta function.
The state mean and state covariance are propagated forward in time using
\begin{equation}
\dot{\hat{x}}(t) = u(t),\quad
\dot{\hat{\tau}}(t) = 0,\quad
\dot{\matr{P}}(t) = \matr{Q},
\end{equation}
where $\matr{Q}$ is a $2 \times 2$ matrix with $Q_{x}$ in the upper left and the remaining entires set to zero. Following the standard notation, we use the $\hat{\cdot}$ (hat) to indicate an estimated quantity.

Measurement updates occur at discrete time steps, identified by the index variable $k = 1, \dots, n$, for $n$ measurements in total.
The measurement $y_{k}$ is
\begin{equation}
\label{eqn:ekf_meas}
y_{k} = h\left(\vect{x}_{k}\right) + v_{k} = x(t_{k} + \tau) + v_{k},
\end{equation}
where $v_{k} \sim \mathcal{N}\big(0, R\big)$ is an additive, zero-mean Gaussian noise term with variance $R$.
All of the noise processes are assumed to be statistically independent at all times.

In order to compute the innovation (i.e., the measurement residual) and the Kalman gain, we require the Jacobian of \Cref{eqn:ekf_meas} with respect to the state vector.
This step is complicated by the fact that $\tau$ is a random variable. 
Following \cite{2014_Li_Online}, we integrate the state and state covariance forward in time over a varying interval until the next measurement arrives.
In a sampled-data system, the most recent control input (i.e., from the reference sensor) drives this integration.
At time step $k$, the best estimate of the delay is $\hat{\tau}_{k - 1}$.
We therefore linearize \Cref{eqn:ekf_meas} about this operating point, which gives
\begin{equation}
\matr{H}_{k} =
\left.\frac{\partial h}{\partial \mathbf{x}}\right|_{\hat{\mathbf{x}}^{-}_{k}} = 
\bbm
1 & \dot{\hat{x}}(t_{k} + \hat{\tau}_{k - 1})
\ebm =
\bbm
1 & u(t_{k} + \hat{\tau}_{k - 1})
\ebm.
\end{equation}
We use the superscript minus and plus signs to denote an estimated quantity immediately before and immediately after the measurement update, respectively.
Unlike the standard EKF formulation, the measurement Jacobian in this case is a function of the most recently-sampled control value, because the control provides the only available information about the rate of change of the state.

As final steps in the recursive EKF measurement update, we compute the innovation variance, the Kalman gain, the updated filter state estimate, and the updated filter state covariance matrix,
\begin{align}
\label{eqn:innovation_var}
S_{k} & = \matr{H}_{k}\matr{P}_{k}^{-}\matr{H}_{k}^{T} + R, \\[1mm]
\label{eqn:ekf_gain}
\matr{K}_{k} & = 
\matr{P}_{k}^{-}\matr{H}_{k}^{T}
S^{-1}_{k}, \\
\label{eqn:ekf_state_update}
\hat{\vect{x}}_{k}^{+} & =
\hat{\vect{x}}_{k}^{-} + 
\matr{K}_{k}\Big(y_{k}-h_{k}\left(\hat{\vect{x}}_{k}^{-}\right)\Big), \\
\label{eqn:ekf_covar_update}
\matr{P}_{k}^{+} & =
\big(\matr{I} - \matr{K}_{k}\matr{H}_{k}\big)\matr{P}_{k}^{-},
\end{align}
where $y_{k}$ is the most recent measurement to arrive.

\subsection{Time as Measured by the Filter}
\label{subsec:update_time}

According to \Cref{eqn:state_vector}, the filtered state estimate is maintained at time $t$ (defined by the reference sensor clock).
Critically, however, there is an ambiguity in \Cref{eqn:ekf_state_update} and \Cref{eqn:ekf_covar_update} regarding the time at which the EKF update is actually applied.
At each iteration, the state is integrated forward in time to $t_{k} + \hat{\tau}_{k - 1}$, prior to the measurement update at time step $k$, and $\hat{\tau}$ is then \emph{modified} by the update.
As a result, the estimated arrival time of incoming (future) measurements is `shifted' by some amount. In turn, the state estimate maintained by the filter is not for the time $t_{k}$, rather the estimate is for the time $t_{k} + \hat{\tau}_{k}$ (immediately after a measurement update).
This distinction (and shifting behaviour) is relevant to the discussion of process noise in \Cref{subsec:state_estimate}.

\subsection{Estimation of the Delay Parameter}
\label{subsec:delay_estimate}

In \Cref{sec:ident} we established that the delay parameter requires a finite time for identification.
Here, we consider the question of whether the innovation variance, defined by \Cref{eqn:innovation_var}, represents the true uncertainty of the innovation.
The scalar term
\begin{equation}
\label{eqn:state_innovation_variance}
S_{s, k} = \matr{H}_{k}\matr{P}_{k}^{-}\matr{H}_{k}^{T}
\end{equation}
in \Cref{eqn:innovation_var} is the projection of the \emph{state} uncertainty into the measurement space.
Recall that the measurement Jacobian $\matr{H}_{k}$ depends on $u(t_{k} + \hat{\tau}_{k - 1})$, the most recent control.

There is a structural concern that arises immediately when using \Cref{eqn:state_innovation_variance} to compute the variance: the applied control at time $t_{k} + \hat{\tau}_{k - 1}$ may be very different from the applied control at an earlier or later time, and hence the state of the system may be very different also.
To illustrate this, consider the case where, on filter startup, $\hat{\tau}_{0} = 0$ and $u(0) \approx 0$.
The true delay is unknown and hence when the first measurement arrives, the position of the system, $x(t + \tau)$, might be anywhere in the forward or backward reachable set from $x(0) = x^{0}$, if $x(0)$ is known.
We can write \Cref{eqn:state_innovation_variance} as
\begin{equation}
S_{s, 1} \approx
\bbm
1 & u(0)
\ebm
\bbm
\sigma_{x, 1}^{2} & 0 \\
0 & \sigma_{\tau, 0}^{2}
\ebm
\bbm
1 & u(0)
\ebm^{T}
\approx
\sigma_{x, 1}^{2},
\end{equation}
where $\sigma_{x, 1}$ is the standard deviation of the position uncertainty just prior to the measurement update and $\sigma_{\tau, 0}$ is the standard deviation of the time delay uncertainty.
Because the control input has a value near zero, the uncertainty in the delay is down-weighted very significantly when computing $S_{s, k}$, leading to an innovation variance that is too small and to a spurious Kalman gain calculation.\footnote{The opposite case, where the innovation variance is too large, can also occur, but this is potentially less of an issue.}
Further, this situation is not confined to the first measurement update---it appears at every update step (see \Cref{subsec:recursion}).

The innovation variance should reflect the true uncertainty at each iteration, based on the reachable sets for the system, but this is not the case in general for \Cref{eqn:state_innovation_variance} (nor, therefore, for \Cref{eqn:innovation_var} either).
Since the Gaussian distribution has short tails, there is a risk that the innovation variance will frequently be too small (i.e., that the Gaussian will be too `narrow') leading to an inconsistent filter estimate.
We show in \Cref{sec:experiments} that this situation arises in practice.
It is important to emphasize that this structural problem is not the result of linearization error in the model. 
Rather, the problem is due to an unavoidable lack of knowledge about the state and the control inputs within the delay interval.
A possible `solution' is to inflate the measurement uncertainty, but this introduces a range of other concerns.

\subsection{Estimation of the System State and Covariance}
\label{subsec:state_estimate}

The results from Section III also have implications for the position (state) estimate and the position uncertainty.
If the delay estimate is wrong, this will tend to have detrimental effect on the state estimate (because the state and the delay become correlated as the filter operates).
The usual way to handle modelling errors is to tune the filter process noise (co)variance, inflating the filter state covariance matrix as a measure of ignorance of the underlying, `true' model. 

While the tuning of process noise is often effective, there is a second structural concern that makes tuning difficult in the case of delay estimation: discontinuous `jumps' in the delay estimate lead to jumps in the time maintained by the filter and alter the way in which process noise is incorporated.

Because the delay estimate is never exactly correct, each measurement update is always applied at an incorrect point in time.
And because every measurement update alters the value of $\hat{\tau}$, there is a change in the integration period over which process noise is added.
It follows that the integration period is never exactly correct and the amount of process noise added is also never `correct,' relative to the process noise that would be added in a system where the delay was known exactly.

Depending on the sequences of shifts in $\hat{\tau}$, from $k - 1$ to $k$ and so on, the sum of the integration periods may be too long or too short.
The process noise variance can only be adjusted up or down as part of the tuning process and so both cases cannot be handled simultaneously.
Here, the primary problem is that the specific sequence of `jumps' is dependent on the dynamics and the random noise---again, the worst outcome is filter inconsistency.
Once the filter estimate is inconsistent, the only recourse is to restart the estimator.

Another interesting consequence of the structural problem described above is the possibility of `time inversion' within the filter.
Consider the estimated delay $\hat{\tau}_{k - 1}$ after the measurement update at time step $k - 1$ and the estimated delay $\hat{\tau}_{k}$ after the measurement update at time step $k$.
There is nothing that prevents the change $\Delta\hat{\tau} = \hat{\tau}_{k} - \hat{\tau}_{k - 1}$ from being a (relatively large) \emph{negative} quantity (this can occur due to, e.g., the effects of random measurement noise).
Following the discussion in \Cref{subsec:update_time}, if the interval between measurements is short, na\"{i}ve application of the filter time update can require the process model to operate \emph{backwards} in time.
The standard implementation of the EKF (and other Bayes filters) adds process noise when the filter runs forwards in time; it is not immediately clear how to handle the paradoxical (reverse) situation in a rigorous way.

\subsection{Effect of Recursion}
\label{subsec:recursion}

The EKF maintains an estimate at a single point in time.
Since the filter model is a Markov chain, there is no memory beyond the state-delay cross-covariance.
The estimate of the delay parameter, $\hat{\tau}$, will never be exactly correct and so the structural issues (viz.\ identifiability/observability and `jumps' of the state in time) described in \Cref{subsec:delay_estimate,subsec:state_estimate} appear \emph{on every iteration} of the filtering algorithm.

\section{Simulation Studies}
\label{sec:experiments}

In this section, we apply the algorithm from \cite{2014_Li_Online} to the system \Cref{eqn:single_state} and evaluate the performance of the estimator through Monte Carlo simulations.
We consider the setting of \Cref{sec:filter}, where the filter process model operates in continuous time while measurements arrive at discrete time steps (i.e., the hybrid EKF).

\subsection{Experiment Design}
\label{subsec:design}

We chose two test trajectories with different characteristics for our evaluation.
The equation of each trajectory has a smooth analytic form that is the sum of simple sinusoids, allowing exact derivatives to be computed. 
Trajectory 1, shown at the top of \Cref{fig:traj1_results}, is a `conservative' trajectory---the rate of change of the velocity (i.e., the control input $u(t)$ to the system) is limited.
Trajectory 2, shown at the top of \Cref{fig:traj2_results}, is a more aggressive trajectory with a larger maximum acceleration.
Note that nonzero acceleration is a requirement in this case: if the velocity remains constant, the system state is unobservable and the delay parameter is unidentifiable regardless of the observation duration.\footnote{This requirement is easy to show. Under constant velocity $u(t) = v$, an arbitrary time delay $\tau$ can always be compensated for by a shift in the initial position state by $\Delta x = -v\tau$, yielding the same output for $y(t)$.}

\begin{figure}[t]
\centering
\includegraphics[width=\columnwidth,trim=0 15pt 0 0,clip]{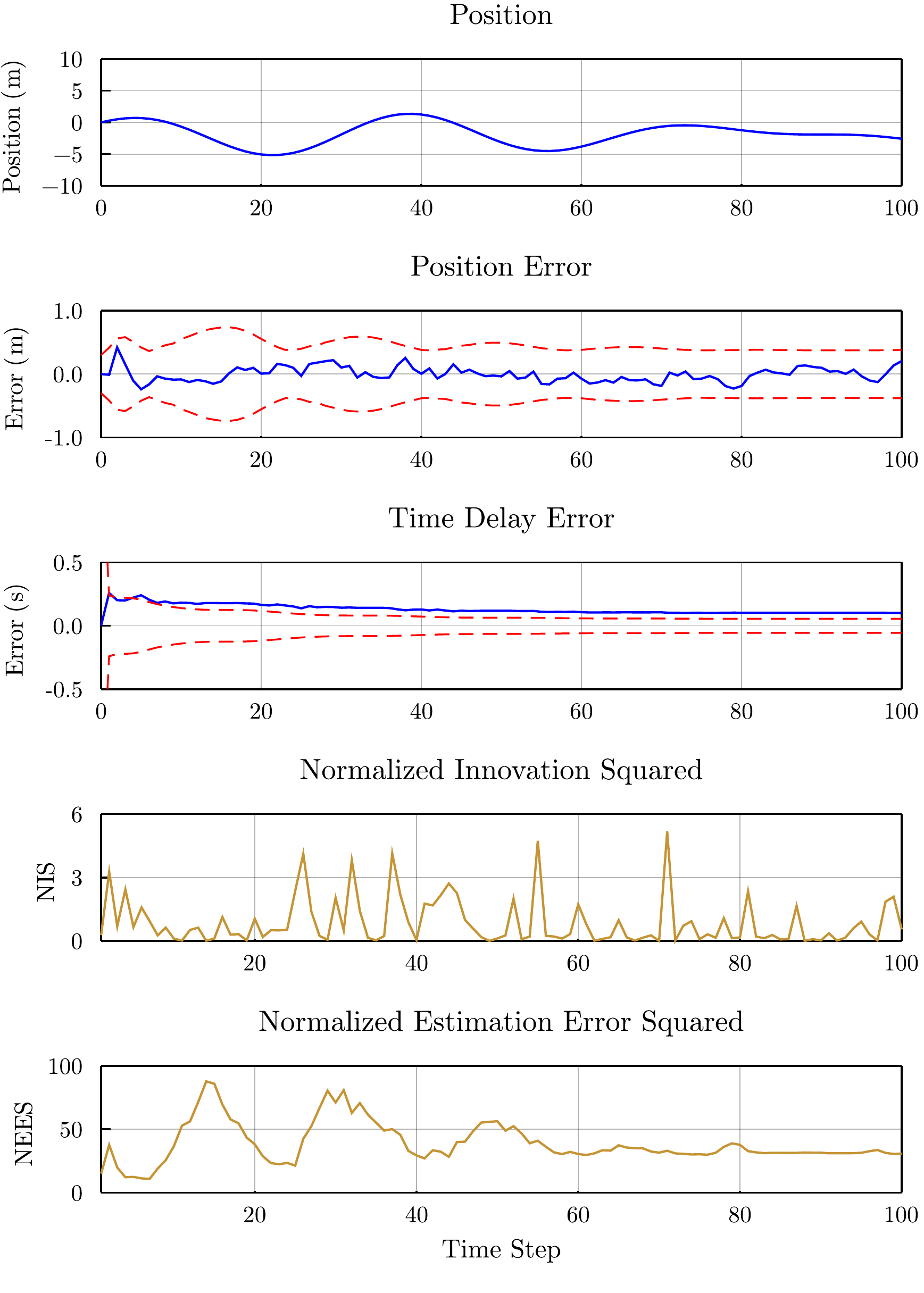}
\caption{Results from one Monte Carlo simulation. \emph{Top:} Trajectory 1. \emph{Middle:} Position and time delay errors (blue solid lines) and associated 3$\sigma$ bounds (red dashed lines) with respect to time step. \emph{Bottom:} Filter NIS and NEES with respect to time step. In this case, the filter is inconsistent and displays apparent divergence. The true delay value was $\tau = 5.9$ ms.} 
\label{fig:traj1_results}
\vspace*{-3mm}
\end{figure}

\begin{figure}[t]
\centering
\includegraphics[width=\columnwidth,trim=0 15pt 0 0,clip]{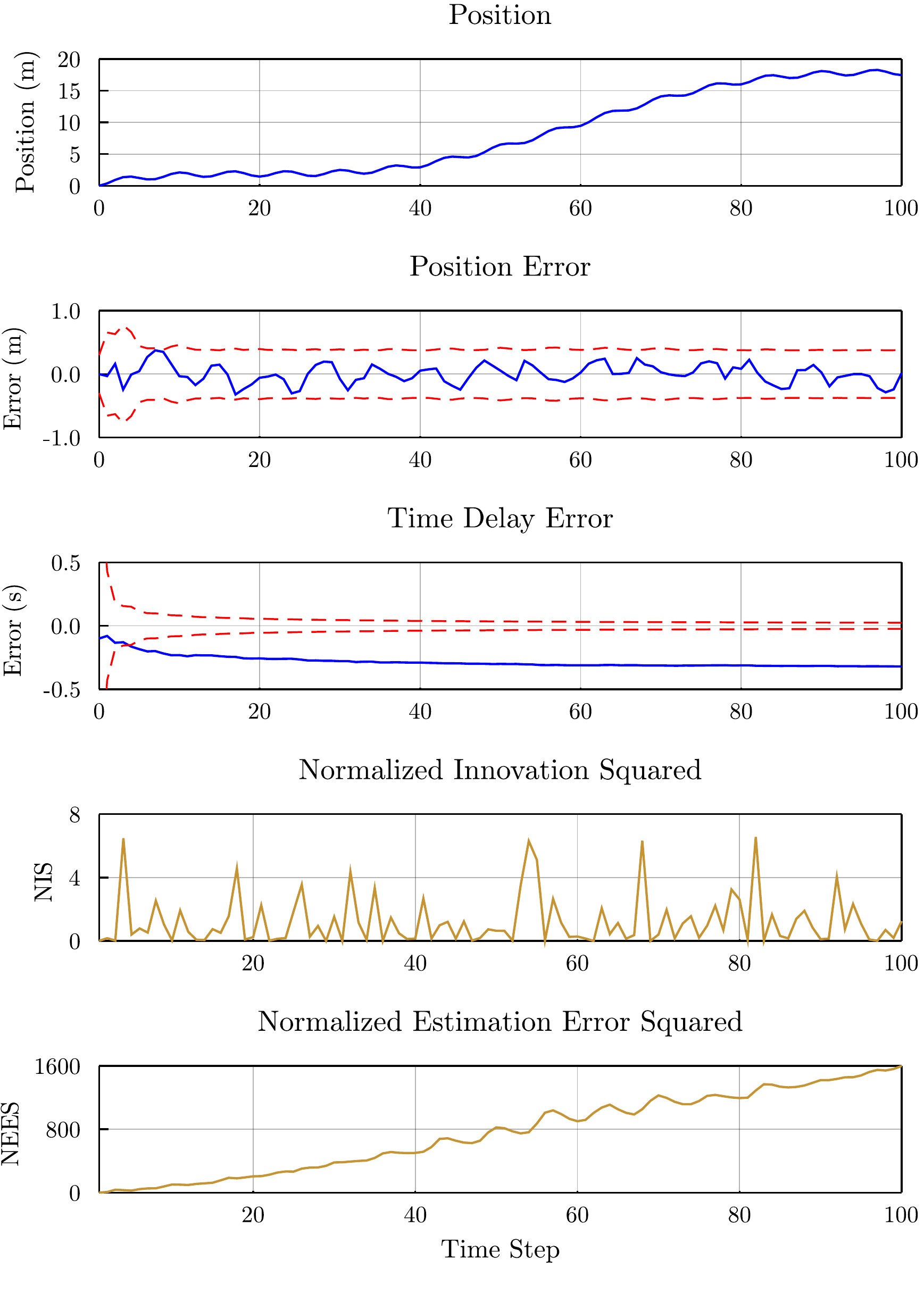}
\caption{Results from one Monte Carlo simulation. \emph{Top:} Trajectory 2. \emph{Middle:} Position and time delay errors (blue solid lines) and associated 3$\sigma$ bounds (red dashed lines) with respect to time step. \emph{Bottom:} Filter NIS and NEES with respect to time step. The delay estimate rapidly becomes inconsistent and ultimately diverges (beyond the 10-second window shown). The true delay value was $\tau = -84.2$ ms.}
\label{fig:traj2_results}
\vspace*{-3mm}
\end{figure}

We ran 1,000 Monte Carlo simulations for each trajectory, randomly varying the value of the (true) time delay on a per-trial basis.
The delay value was sampled from a zero-mean Gaussian distribution with a standard deviation of 50 ms.\footnote{Although we specify simulation quantities in metres and seconds for convenience, this choice is arbitrary.}
Position updates occurred at a rate of 10 Hz (i.e., every 100 ms); each position measurement was corrupted by zero-mean Gaussian noise with a standard deviation of 0.25 m.
The continuous-time power spectral density matrix \cite{2008_Farrell_Aided} for the process noise was
\begin{equation}
\matr{Q} = \bbm 1.0~\text{m$^{2}$/s$^{2}$} & 0 \\ 0 & 0 \ebm,
\end{equation}
which was tuned for the experiments to ensure that the position estimate did not diverge.

At the start of each trial, the state vector and state covariance matrix were initialized to
\begin{align}
\hat{\vect{x}}_{0} & = \bbm 0~\text{m} & 0~\text{s} \ebm^{T}, \\[1mm]
\hat{\matr{P}}_{0} & = \bbm 0.01~\text{m$^{2}$} & 0 \\ 0 & 0.25~\text{s$^{2}$} \ebm.
\end{align}
An initial choice of $\hat{\tau} = 0~\text{s}$ is reasonable if no prior information about the delay is available.
Importantly, the uncertainty of the initial delay estimate was intentionally set to $\sigma_{\tau} = 0.5~\text{s}$ (i.e., $\sigma^{2}_{\tau} = 0.25~\text{s}^2$).
This value is an order of magnitude larger than the standard deviation of the Gaussian distribution from which the true delay values were sampled.
We found that, without inflating the initial delay uncertainty, the filter would almost always become rapidly inconsistent or diverge;
our inflation of the delay uncertainty is in line with existing work (e.g., \cite{2020_Geneva_OpenVINS}).

Based on the initial results, we ran an additional set of 1,000 simulation trials (per trajectory) where we fixed the true delay value at $\tau = -50~\text{ms}$ (i.e., a 50 ms lag in the measured position).
We then evaluated the consistency of the estimator as described in the next section.

\subsection{Performance Evaluation}
\label{subsec:performance}

We assessed the performance of the filter using a standard set of criteria intended to measure estimator accuracy and consistency.
Accuracy can be determined by calculating the average root mean square position and delay  errors across a series of time steps, over multiple simulation trials.
To analyze estimator consistency, we compute two statistics, the normalized innovation squared (NIS) and the normalized estimation error squared (NEES). The NIS and NEES are defined by the quadratic forms
\begin{equation*}
\label{eqn:nees}
\epsilon_{y, k} \triangleq 
\mathrm{e}_{y, k}^{T}\,S_{k}^{-1} \mathrm{e}_{y, k}~~\text{and}~~
\epsilon_{\vect{x}, k} \triangleq 
\mathbf{e}_{\vect{x}, k}^{T} \left(\matr{P}^{+}_{k}\right)^{-1} \mathbf{e}_{\vect{x}, k},
\end{equation*}
respectively, where $\mathrm{e}_{y, k} = y_{k}-h_{k}\left(\hat{\vect{x}}_{k}^{-}\right)$ is the innovation and $\mathbf{e}_{\vect{x}, k} = \vect{x}_{k} - \hat{\vect{x}}_{k}$ is the state estimate error, both calculated at time step $k$ \cite{2001_Bar-Shalom_Estimation}.
If the estimator is consistent, $\epsilon_{y, k}$ and $\epsilon_{\vect{x}, k}$ should be $\chi^{2}$ distributions with one and two degrees of freedom, respectively.
We follow the usual practice of computing the average NEES (ANEES) over multiple trials \cite{2001_Bar-Shalom_Estimation,2018_Chen_Weak} to obtain a better measure of performance.

Results from the Monte Carlo simulations with varying delays are summarized in \Cref{tab:rmsposition,tab:rmsdelay}.
Several trends are apparent: while the RMS position error remains bounded (in part due to tuning of the process noise), the RMS error for the time delay is substantial on average even by time step 20.
In fact, the delay error after 20 to 40 time steps is \emph{larger} on average than the initial error.

We also determined the number of times that the error in the filter estimate of the time delay was within the computed $3\sigma$ bounds after 100 time steps (i.e., 10 seconds of simulation time); this occurred for only 189 out of 1,000 trials for Trajectory 1 and for just 3 out of 1,000 trials for Trajectory 2.
Although Trajectory 2 is more `informative' (i.e., has greater excitation) than Trajectory 1, Table 2 indicates that the RMS delay error for Trajectory 2 is substantially larger than for Trajectory 1.

In \Cref{fig:traj1_results} and \Cref{fig:traj2_results}, results from one simulation trial of Trajectory 1 and one simulation trial of Trajectory 2 are shown, respectively.
\Cref{fig:traj1_results} displays the phenomenon of \emph{apparent divergence} of the delay estimate, that is, convergence to an incorrect, biased value \cite{1974_Gelb_Applied}.
By the 10-second mark, the inconsistency is severe enough that the filter is insensitive to additional data---even if the simulation is run for 100 seconds, the delay estimate does not improve (or converge).
The outcome is the same for Trajectory 2, although the delay estimate is significantly worse.
In both cases, a large number of the NIS values are much greater than one (confirming the analysis in \Cref{subsec:delay_estimate}).
Importantly, the performance suggested by these examples is better than for many of the random trials.

Results from the Monte Carlo simulations with the fixed time delay of $\tau = -50~\text{ms}$ are summarized in \Cref{tab:nees}.
The ANEES statistics show that the filter is frequently highly inconsistent for both trajectories.  
These results support the arguments from theory in \Cref{sec:filter} that the structure of the filter is unsound.

\subsection{Discussion}
\label{subsec:discussion}

Our experimental results verify that fundamental problems exist with the structure of the EKF as a recursive estimator for time delays.
In addition to being prone to bias and inconsistency, the filter is highly sensitive to  initial conditions and to random noise.
There is a complex interplay between the system dynamics, the delay magnitude, and the noise terms that makes tuning difficult.
Although we analyzed the simple system \Cref{eqn:single_state} only, the same structural problems exist for delay estimation in standard GNSS-aided and vision-aided inertial navigation systems and in other multisensor systems.
We posit that results presented in the literature to date, which show convergence of delay estimates, may not hold in the general case of varying dynamics and without careful tuning (that may not be possible).

\begin{table}[t]
\centering
\caption{RMS position error versus time step. Each entry is the RMS error over 1,000  simulation trials with varying delays.}
\label{tab:rmsposition}
\setlength{\tabcolsep}{10pt}
\renewcommand{\arraystretch}{1}
\begin{tabular*}{1.0\columnwidth}{@{\extracolsep{\fill}} p{17mm} c c c c c}
\toprule
& \multicolumn{5}{c}{\textbf{RMS Position Error [m]}} \\[1mm]
~~~~Time Step & 20 & 40 & 60 & 80 & 100 \\ 
\midrule
~~~~Trajectory 1 & 
 0.122 &
 0.126 &
 0.114 &
 0.102 &
 0.102 \\[1mm]
~~~~Trajectory 2 &
 0.107 &
 0.117 &
 0.109 &
 0.106 &
 0.112 \\
\bottomrule
\end{tabular*}
\end{table}

\begin{table}[b]
\centering
\caption{RMS delay error versus time step. Each entry is the RMS error over 1,000  simulation trials with varying delays.}
\label{tab:rmsdelay}
\setlength{\tabcolsep}{8pt}
\renewcommand{\arraystretch}{1}
\begin{tabular*}{1.0\columnwidth}{@{\extracolsep{\fill}} p{17mm} c c c c c}
\toprule
& \multicolumn{5}{c}{\textbf{RMS Delay Error [ms]}} \\[1mm]
~~~~Time Step & 20 & 40 & 60 & 80 & 100 \\ 
\midrule
~~~~Trajectory 1 & 
  81 &
 111 &
 127 &
 131 &
 132 \\[1mm]
~~~~Trajectory 2 & 
 128 &
 151 &
 165 &
 175 & 
 182 \\
\bottomrule
\end{tabular*}
\end{table}

\begin{table}[t]
\centering
\caption{Average NEES versus time step. Each entry is the ANEES over 1,000  simulation trials with the fixed delay $\tau = -50~\text{ms}$.}
\label{tab:nees}
\setlength{\tabcolsep}{8pt}
\renewcommand{\arraystretch}{1}
\begin{tabular*}{1.0\columnwidth}{@{\extracolsep{\fill}} p{17mm} c c c c c}
\toprule
& \multicolumn{5}{c}{\textbf{ANEES}} \\[1mm]
~~~~Time Step & 20 & 40 & 60 & 80 & 100 \\ 
\midrule
~~~~Trajectory 1 &
  10.6 &
  24.1 &
  45.5 &
  54.1 &
  54.6 \\[1mm]
~~~~Trajectory 2 & 
  54.9 &
 137.5 &
 254.5 &
 376.3 &
 520.4 \\
\bottomrule
\end{tabular*}
\end{table}

\section{Conclusion}
\label{sec:conclusion}

We have examined the application of recursive, causal filtering (in particular, the EKF) to state and delay estimation, as proposed recently in the literature.
Our analysis of the identifiability of the delay parameter for a simple time-delay system implies that the standard error calculation employed in the filter update step is specious.
In turn, the corrected state and delay estimates are prone to inconsistency.
Further, the shifts in the value of the delay parameter alter the way in which process noise is incorporated within the filtering framework.
We emphasize that these characteristics are \emph{structural} and not due to incorrect tuning of noise terms or to a lack of knowledge of the system dynamics.
Our experimental results confirm that the EKF formulation is very sensitive to initial conditions and generally produces inconsistent and biased delay estimates (in addition to poorer-quality state estimates).
Time-varying delays or measurement jitter may further degrade estimator performance.

While we do not suggest an immediate remedy for the problems above, there are several possibilities to explore.
One option is to employ a modified, sliding-window estimator (similar to \cite{2018_Qin_Online}), where the window size is sufficient to `cover' any feasible delay interval.
The price for this approach is the computational overhead of the windowed algorithm and the inherent lag in the state and parameter estimates.
Another solution could be to examine the design of a stochastic observer (e.g., a chain observer) for a specific system of interest;
it may be possible to derive valuable convergence results in certain cases. We are currently reviewing extensions to invariant filtering on the appropriate Lie group (e.g., $\mathrm{SE}(3)$ in the case of aided inertial navigation).

\section*{Acknowledgements}
\label{sec:ack}

We thank the anonymous reviewers for their valuable comments and suggestions that helped us to improve the clarity of this manuscript.

\bibliographystyle{IEEEcaps}
\bibliography{2021-kelly-temporal-mfi}

\end{document}